\DeclarePairedDelimiter{\ceil}{\lceil}{\rceil}
\newtheorem{theorem}{Theorem}
\begin{document}

\preprint{APS/123-QED}

\title{Dynamic, Symmetry-Preserving, and Hardware-Adaptable Circuits for Quantum Computing Many-Body States and Correlators of the Anderson Impurity Model}

\author{Eric B. Jones}
        \email{eric.jones@infleqtion.com}
        \affiliation{Infleqtion, Boulder CO, 80301}
\author{Cody James Winkleblack}
        \email{James.Winkleblack@nrel.gov}
        \affiliation{National Renewable Energy Laboratory, Golden CO, 80401}
\author{Colin Campbell}
        \affiliation{Infleqtion, Boulder CO, 80301}
\author{Caleb Rotello}
        \affiliation{National Renewable Energy Laboratory, Golden CO, 80401}
\author{Edward D. Dahl}
        \affiliation{IonQ, College Park, MD 20740}
\author{Matthew Reynolds}
        \affiliation{National Renewable Energy Laboratory, Golden CO, 80401}
\author{Peter Graf}
        \affiliation{National Renewable Energy Laboratory, Golden CO, 80401}
\author{Wesley Jones}
        \affiliation{National Renewable Energy Laboratory, Golden CO, 80401}

\date{\today}

\begin{abstract}
We present a hardware-reconfigurable ansatz on $N_q$-qubits for the variational preparation of many-body states of the Anderson impurity model (AIM) with $N_{\text{imp}}+N_{\text{bath}}=N_q/2$ sites, which conserves total charge and spin z-component within each variational search subspace. The many-body ground state of the AIM is determined as the minimum over all minima of $O(N_q^2)$ distinct charge-spin sectors. Hamiltonian expectation values are shown to require $\omega(N_q) < N_{\text{meas.}} \leq O(N_{\text{imp}}N_{\text{bath}})$ symmetry-preserving, parallelizable measurement circuits, each amenable to post-selection. To obtain the one-particle impurity Green's function we show how initial Krylov vectors can be computed via mid-circuit measurement and how Lanczos iterations can be computed using the symmetry-preserving ansatz. For a single-impurity Anderson model with a number of bath sites increasing from one to six, we show using numerical emulation that the ease of variational ground-state preparation is suggestive of linear scaling in circuit depth and sub-quartic scaling in optimizer complexity. We therefore expect that, combined with time-dependent methods for Green's function computation, our ansatz provides a useful tool to account for electronic correlations on early fault-tolerant processors. Finally, with a view towards computing real materials properties of interest like magnetic susceptibilities and electron-hole propagators, we provide a straightforward method to compute many-body, time-dependent correlation functions using a combination of time evolution, mid-circuit measurement-conditioned operations, and the Hadamard test.

\end{abstract}

\maketitle


\section{\label{sec:intro}Introduction}

The electronic structure problem---solving for the ground states of collections of interacting electrons---is famously difficult. In its full generality, the electronic structure problem resides in the Quantum Merlin Arthur (QMA) complexity class, which is comprised of problems whose solutions are thought to be difficult to find even on a quantum computer \cite{schuch2009computational, PRXQuantum.3.020322}. In spite of this, a number of classical approximation techniques, like density functional theory (DFT) and GW theory, are known to describe many weakly- and moderately-correlated electronic systems well \cite{RevModPhys.87.897, PhysRevLett.96.226402}. Meanwhile, it remains unclear what degree of advantage quantum computing alone will confer over classical methods for the electronic structure problem \cite{lee2023evaluating}. Moreover, quantum processors remain relatively small and noisy, with the largest processors ranging from dozens to hundreds of qubits with typical two-qubit gate infidelities on the orders of .1\% to 1\% \cite{kim2023evidence, moses2023race, morvan2023phase}. As such, a compelling near-term avenue to the demonstration of quantum utility in the electronic structure of real materials and chemical systems is to marry the scalability of weakly-correlated theories, like DFT and GW, with theories that can treat strong electronic correlations on a subsystem self-consistently via embedding theory \cite{tomczak2015qsgw+, choi2016first, PhysRevB.95.155104, clinton2024towards}.

Dynamical mean-field theory (DMFT) is one such embedding theory, which is particularly well-suited to periodic systems \cite{RevModPhys.68.13, kotliar2006electronic}. In DMFT, one replaces a lattice with interactions on every site \footnote{Generally, a ``site'' can refer to an atom, a collection of atoms, or some collection of interacting electronic orbitals. Here we consider it a single spatially localized site with one spin-up and one spin-down degree of freedom.} with two, simultaneous but complementary descriptions of the dynamics. The first description neglects all interactions on the extended lattice but retains hopping between sites so that the quasi-particle band picture of the electronic dynamics is considered valid. For real materials, this description can be treated with theories like DFT or GW. The second description considers only a specific site, an impurity or cluster, within the lattice and treats all on-site interactions and transitions exactly, but approximates the coupling of the site to the rest of the lattice as electrons hopping into and out of an effective bath. This description is similar to the mean-field treatment of the Ising model, where the local Green's function plays the analogous role of the on-site magnetization and the dynamic hybridization function plays the role of the mean-field \cite{martin2016interacting}. Self-consistency between the non-interacting lattice and impurity descriptions is enforced by insisting that the local lattice Green's function and self-energy are equivalent to the impurity Green's function and self-energy. Self-consistency is obtained by varying the hybridization function, termed bath fitting, and constructs a solution to the original problem, which can simultaneously describe both delocalized, band-like behavior, and localized dynamical correlations.

The most computationally burdensome aspect of DMFT is solving for the impurity Green's function and, by extension, self-energy. Classically, this is often done either via exact diagonalization (ED) \cite{PhysRevLett.72.1545}, which scales exponentially in the number of impurity and bath orbitals---the number of single-electron degrees of freedom comprising the impurity and a discrete representation of the bath, respectively---or via continuous-time quantum Monte Carlo (CTQMC), which scales exponentially only in the number of impurity orbitals, but which suffers from the fermionic sign problem \cite{RevModPhys.83.349, PhysRevB.74.155107, PhysRevB.75.155113}. And while recent work has proposed an algorithm for solving impurity models with quasipolynomial runtime in the number of bath orbitals using fermionic Gaussian states, the method still scales exponentially in the number of impurity orbitals \cite{bravyi2017complexity}. Exact diagonalization has been used to solve impurities of up to about 25 (spatial) orbitals, while more specialized quantum chemical approximation schemes have treated systems of up to 127 orbitals \cite{PhysRevB.73.205121, PhysRevX.11.021006}. Tensor network  and renormalization group methods show the promise to more scalably solve the impurity problem \cite{PhysRevB.90.115124, PhysRevB.92.155132, PhysRevX.5.041032, jamet2023anderson, bulla2008numerical, PhysRevLett.93.246403}, but the community has naturally looked towards quantum computers as alternative impurity solvers.

Methods to solve the impurity problem via quantum circuits target either the frequency-dependent Green's function or the time-dependent Green's function, which must then be transformed to enforce self-consistency. Initial work by Bauer et al. and Kreula et al. proposed measuring Green's functions of the single-impurity Anderson model (SIAM) on the real-time axis via Trotterized time evolution and ancilla qubits \cite{PhysRevX.6.031045, kreula2016few}. Subsequently, Rungger et al. demonstrated the computation of the impurity Green's function directly on the real-frequency axis using the variational quantum eigensolver (VQE) to construct many-body states in the Lehmann representation on quantum devices \cite{rungger2019dynamical}. Keen et al. fit real-time Green's functions derived from Trotterized evolution on a quantum device to extract frequency dependent Green's functions \cite{keen2020quantum}. Lie algebraic fast-forwarding has extended the reach of near-term processors to calculate real-time Green's functions as was recently demonstrated by Steckmann et al. \cite{PhysRevResearch.5.023198}. And the quantum equation of motion method was used to compute Green's functions in the Lehmann representation to perform DMFT on 14 qubits of an IBM quantum computer as well \cite{selisko2024dynamical}.  Still, many of the aforementioned techniques suffer from problems of scalability, either from too-deep circuits required by Trotter error \cite{jaderberg2020minimum}, a combinatorial number of many-body states, which need to be constructed in the Lehmann representation \cite{rungger2019dynamical}, or an exponential growth in the Hamiltonian algebra as a function of bath sites for fast-forwarding \cite{PhysRevResearch.5.023198}.

An alternative proposed recently by Jamet et al. constructs the impurity Green's function on the real or imaginary frequency axis via its continued fraction expression and a variational version of the Lanczos iteration \cite{jamet2021krylov}. While the associated number of many-body states that need to be prepared on a quantum computer also scales combinatorially in the number of total orbitals, the associated Krylov subspace can often be truncated to provide a good approximation to the Green's function before the full Krylov space dimension is reached. However, experimental implementations of the Krylov variational quantum algorithm (KVQA) have been few \cite{PhysRevB.107.165155}. The main challenge in implementing a scalable version of the KVQA is to find an ansatz that does not suffer from barren plateaus in its gradient, as does the low-depth hardware-efficient ansatz (HEA) \cite{mcclean2018barren, cerezo2021cost}, but which also does not suffer from high Trotterization and SWAP routing overhead as do more physically-informed ansatzae, such as quantum alternating operator and unitary coupled cluster ansatzae \cite{kremenetski2021quantum, anand2022quantum}.

\begin{figure}
\centering
\includegraphics[width=.8\linewidth]{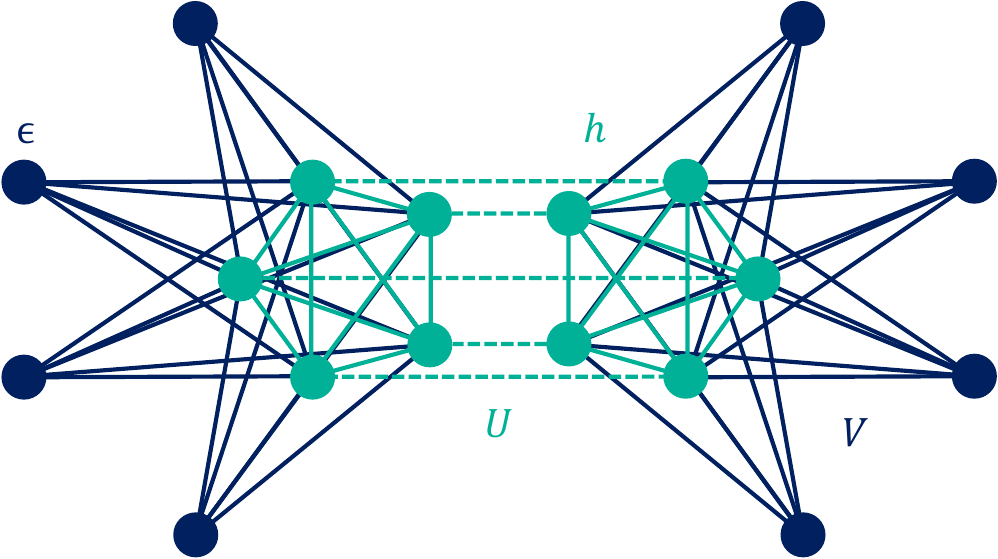}
\caption{\label{fig:AIM} Topology of the charge and spin conserving $\mathbb{Z}_2$-symmetric Anderson impurity model with on-site interactions in the the impurity. There are five impurity sites and four bath sites. The left and right subgraphs demark the two spin registers. Light green solid lines represent hopping between and on-site energies of impurity sites (light green circles) given by the matrix, $h$, and a complete graph topology within each spin register. Light green dashed lines represent on-site interactions between spin-up and spin-down sites in the impurity given by the vector, $U$. Dark purple lines depict hybridization between the impurity and bath sites (dark purple circles) given by the matrix $V$, and forming a complete bipartite graph with the impurity orbitals in each spin register. The diagonal bath energies are contained in the vector $\epsilon$. In this work we restrict the number of impurity sites to one.}
\end{figure}

Motivated by prior work on symmetry-preserving ansatzae (SPA) for VQE \cite{gard2020efficient}, we provide a variational ansatz for the preparation of many-body states of the Anderson impurity model (AIM), which describes the impurity problem in the context of DMFT. Our ansatz has the properties that it is (1) physics-informed and manifestly constrained by the symmetries of the AIM, enabling efficient trainability and high expressibility at low gate depth; and (2) adaptable to arbitrary hardware topologies with little overhead. Our ansatz differs from that in Ref. \cite{gard2020efficient} in that it explicitly accounts for interactions within the impurity. In Sec.~\ref{sec:spa} we introduce the SPA and numerically analyze its expressibility and trainability to prepare ground states of the single-impurity Anderson model with a varying number of bath sites. Then, in Sec.~\ref{sec:spkvqa} we apply our ansatz to the construction of one-particle Green's functions via the variational preparation of a sequence of Lanczos vectors. We remark that the focus of this paper is on assessing the expressivity and trainability of the SPA for a large number of Hamiltonian parameter instantiations rather than closing the DMFT loop. While we expect that our ansatz will find wide utility in the preparation of many-body states of the AIM, we ultimately expect it to find the greatest utility in schema where ground states of the AIM are encoded on a quantum processor either variationally or deterministically via classical tensor network methods and parameter fixing while Green's functions are computed using direct time evolution \cite{jamet2023anderson, provazza2024fast}. For this reason, we propose a new method for computing generic $m$-point correlation functions of fermionic theories in Sec.~\ref{sec:correlators}. We discuss future avenues of work such as this in Sec.~\ref{sec:conclusion}.

\section{\label{sec:spa}Symmetry-Preserving Ansatz}

The Hamiltonian that governs the impurity dynamics in DMFT is the Anderson impurity model (AIM),

\begin{equation}\label{eq:AIM}
\begin{split}
H &= H_{imp} + H_{hyb} + H_{bath}, \\
H_{imp} &= \sum_{i, j=1}^{N_{imp}} \sum_{\sigma=\downarrow}^{\uparrow} h_{ij} c^{\dagger}_{i \sigma} c_{j \sigma} + \sum_{i=1}^{N_{imp}} U_i n_{i \uparrow} n_{i \downarrow}\\
H_{hyb} &= \sum_{i=1}^{N_{imp}} \sum_{b=1}^{N_{bath}} \sum_{\sigma=\downarrow}^{\uparrow} V_{i b} c_{i \sigma}^{\dagger} c_{b \sigma} + h.c.\\
H_{bath} &= \sum_{b=1}^{N_{bath}} \sum_{\sigma=\downarrow}^{\uparrow} \epsilon_b c_{b \sigma}^{\dagger} c_{b \sigma}.
\end{split}
\end{equation}

The AIM in Eq.~\ref{eq:AIM}, and depicted in Fig.\ref{fig:AIM}, is a slight generalization of the SIAM typically used in DMFT calculations and represents the possibility to include more than one impurity site, which can approximate strong spatial correlations otherwise neglected in DMFT \cite{he2014quantum, martin2016interacting}. $H_{imp}$ is comprised of hopping terms ($h_{ij}$) within, on-site energies ($h_{ii}$) of, and interactions ($U_i$) on, $N_{imp}$ impurity sites. $H_{hyb}$ represents hopping, or hybridization, between impurity and bath states with strength $V_{ib}$. And $H_{bath}$ forms a discrete representation of the bath with $N_{bath}$ on-site energies $\epsilon_b$.

\begin{figure}
\centering
\includegraphics[width=\linewidth]{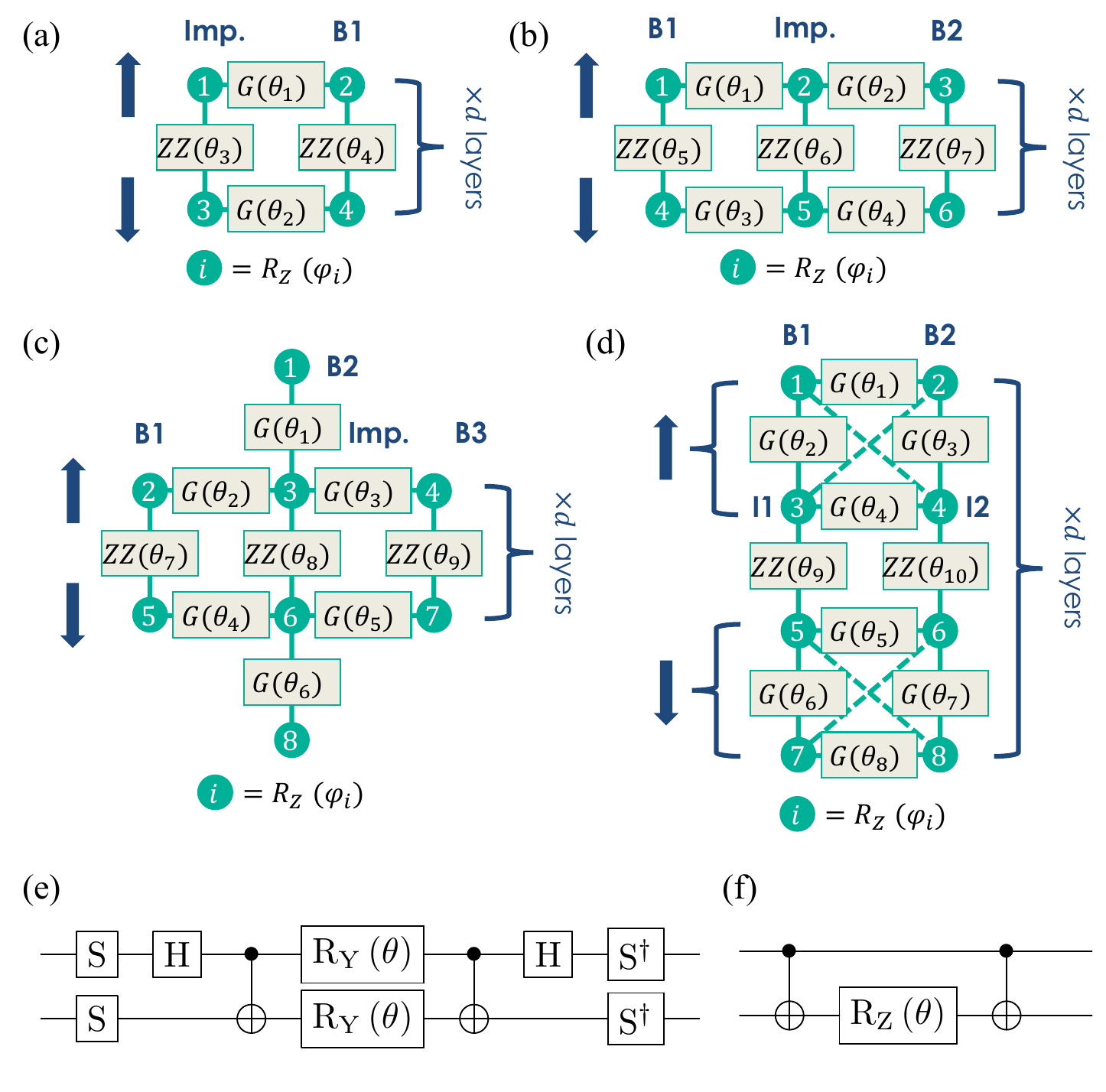}
\caption{\label{fig:ansatz} Symmetry-preserving ansatzae in a two-dimensional, square lattice architecture, $U_{SPA}(\boldsymbol{\theta})$, for (a) one-impurity, one-bath, (b) one-impurity, two-bath, (c) one-impurity, three bath, and (d) two-impurity, two-bath site Anderson impurity models. Solid green circles (lines) represent qubits (planar interactions) with associated $Rz(\varphi_i)$ (Givens and $ZZ$-rotation) gates. Dashed lines represent long-range interactions. (e) Compilation of Givens rotation into a standard gate-set. (f) Compilation of $ZZ$-rotation into a standard gate-set.}
\end{figure}

As is well-known, Eq.~\ref{eq:AIM} commutes with the total charge, $n^+$, and spin z-component, $n^-$, operators, 

\begin{equation}\label{eq:symmetries}
n^{\pm} = \sum_{i} (n_{i \uparrow} \pm n_{i \downarrow}) + \sum_{b} (n_{b \uparrow} \pm n_{b \downarrow}),
\end{equation}
each of which generates a symmetry of the AIM that we thus refer to a charge and spin symmetry, respectively. In addition, Eq.~\ref{eq:AIM} has an overall up-down spin degeneracy---a $\mathbb{Z}_2$ symmetry, which can be broken, for example, by the presence of an external magnetic field. Each of these symmetries is manifestly preserved under the Jordan-Wigner (J-W) mapping from fermionic operators to qubit operators, $c^{\dagger}_{\mu} = \frac{1}{2} (X_{\mu} - i Y_{\mu}) \prod_{\nu < \mu} Z_{\nu}$, where we organize qubit indices such that $\mu \in \{1, \ldots, N_{imp} + N_{bath}\}$ correspond to spin up qubits and $\mu \in \{N_{imp} + N_{bath}+1, \ldots, 2N_{imp} + 2N_{bath} = N_{q}\}$ correspond to spin down qubits. Due to the commutativity between Eq.~\ref{eq:AIM} and the operators in Eq.~\ref{eq:symmetries}, the AIM can be diagonalized into simultaneous eigenstates of spin, charge and energy, indicating that we can partition a variational search over the full $N_{q}$-qubit Hilbert space for the ground state of Eq.~\ref{eq:AIM} into a collection of searches over smaller subspaces. There are $N_q - 1$ non-trivial charge sectors and within each charge sector, $N\in \{1,\ldots, N_q-1\}$, there are $\min(N,N_q-N) + 1$ spin sectors, of which only $\ceil{(\min(N,N_q-N) + 1)/2}$ are unique due to the $\mathbb{Z}_2$ symmetry of Eq.~\ref{eq:AIM}. Hence, if one does not know, a priori, which charge-spin sector the ground state of a specified AIM resides in, one need only perform $O(N_q^2)$ separate runs of VQE to determine the global ground state. If instead, one has a reliable reference state, one can run a symmetry-preserving VQE routine within a single charge-spin sector. Within each charge-spin sector the dimension of the symmetry-constrained subspace is $D(N,S_z)={N_q/2 \choose N_{\uparrow}} {N_q/2 \choose N_{\downarrow}} = {N_q/2 \choose (N+S_z)/2} {N_q/2 \choose (N-S_z)/2}$. It can be verified that $\sum_N \sum_{S_z(N)} D(N,S_z) = 2^{N_q}$ as expected.

\begin{figure}
\centering
\includegraphics[width=\linewidth]{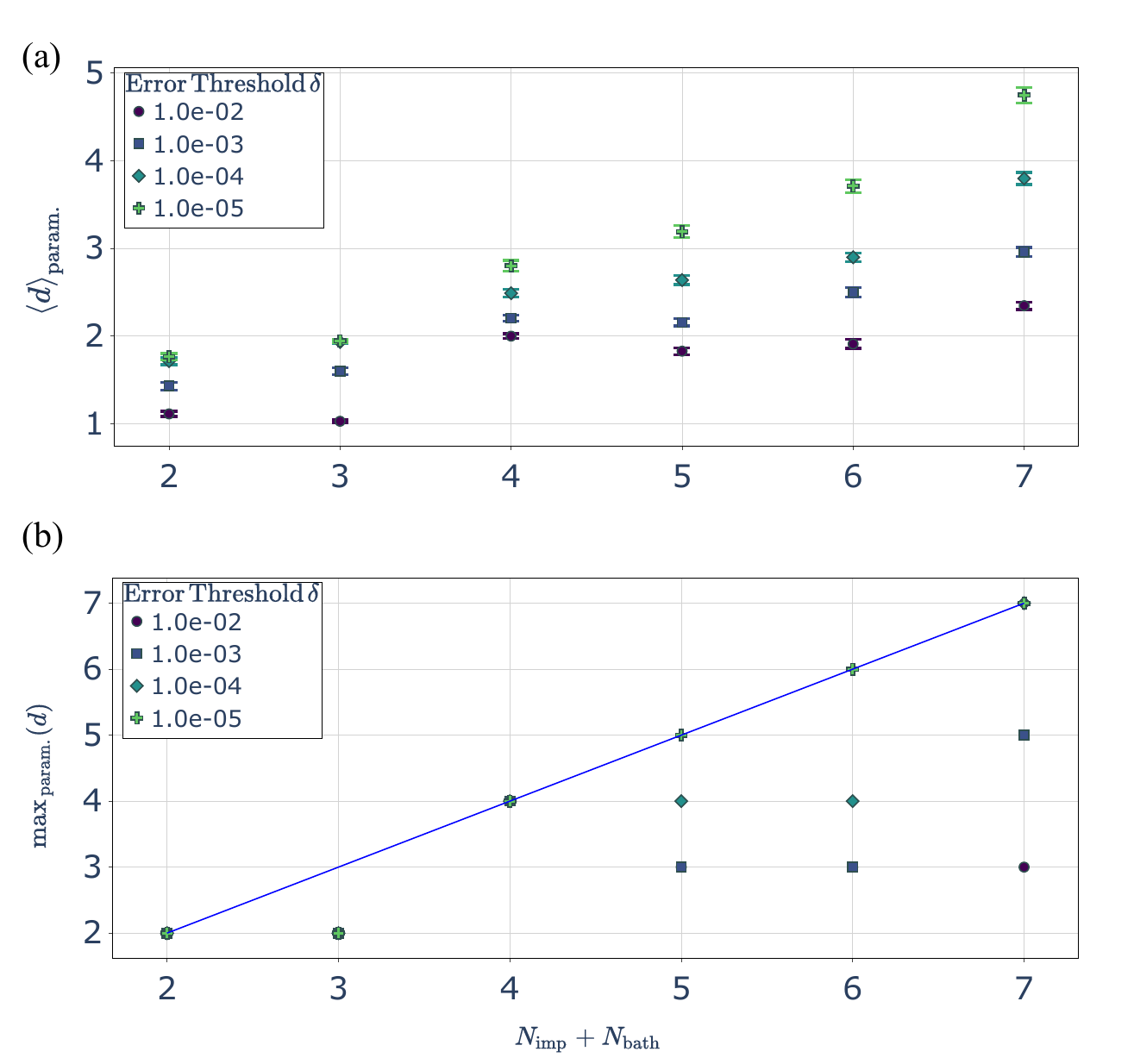}
\caption{\label{fig:vqe_vs_sites} Expressibility of the symmetry-preserving ansatz on a two-dimensional planar qubit topology for the single impurity Anderson model. (a) Hamiltonian seed-averaged depth, $\langle d \rangle_{\text{param.}}$, at which the ansatz is able to approximate the exact ground state as a function of the number of sites, $N_{\text{imp}}+N_{\text{bath}}$ for various ground state overlap error thresholds. Error bars represent one standard error of the mean over the 150 Hamiltonian seeds. (b) Scaling of the ansatz depth, $\max_{\text{param.}}(d)$, required to approximate the exact ground state for the worst-case Hamiltonian seed at each system size. The blue line indicates the line $\max_{\text{param.}}(d)=N_{\text{imp}}+N_{\text{bath}}$ for the most stringent error threshold $\delta = 10^{-5}$.}
\end{figure}

After performing the J-W transformation, an $(N, S_z)$-sector can be instantiated at the start of a circuit by performing $N_{\uparrow} = (N+S_z)/2$ bit flips on the all-zeros, $\ket{0}^{\bigotimes N_q/2}_{\uparrow}$, spin-up qubit register and $N_{\downarrow} = (N-S_z)/2$ bit flips on the all-zeros, $\ket{0}^{\bigotimes N_q/2}_{\downarrow}$, spin-down qubit register. Following Ref.~\cite{gard2020efficient}, we place the corresponding Pauli-$X$ gates as evenly-spaced as possible within each register. Then, the interpretation of combined charge-spin symmetry in any subsequent VQE ansatz is that operations should preserve total Hamming weight within each qubit register. Note that these initial excitations should technically also include Pauli-$Z$ gates from the J-W transformation, but their effect is only to generate a global phase on a classical product state, which we can ignore. Fig.~\ref{fig:ansatz} shows the form of the SPA for various system sizes. In addition to symmetry considerations, the structural instantiation of the SPA is also informed by the architecture of the processor on which it is run. For instance, it can be easily adapted to a two-dimensional, (quasi-)planar qubit connectivity such as some neutral atom \cite{graham2022multi} and superconducting \cite{arute2019quantum} platforms.

Fig.~\ref{fig:ansatz}a depicts the SPA for a two-site AIM where from top left to bottom right the qubits are ordered as $(\text{Imp.}, \uparrow), \: (\text{Bath}, \uparrow), \:(\text{Imp.}, \downarrow), \: (\text{Bath}, \downarrow)$. Mixing of different charge configurations within each spin register is accomplished by parameterized Givens rotations, $G(\theta_i)$ (note that fractional $\text{iSWAP}(\theta_i)$ gates could also be used here). Meanwhile, Hamming weight preservation within each spin register is accomplished by coupling the spin up and down registers only via two-qubit phase rotations, $ZZ(\theta_i)=e^{-i \theta_i ZZ/2}$, motivated by the Hubbard $U$ interaction in Eq.~\ref{eq:AIM}. Finally, an independent $R_Z(\varphi_i)$ rotation is placed on each qubit at the end of the layer to represent on-site energies and cancel unwanted phases. This single-layer structure can then be repeated $d$ times. Figs.~\ref{fig:ansatz}b-c demonstrate the generalization of the ansatz for AIMs with two and three bath sites, respectively.

The SPA shares similarities with the Hamiltonian Variational Ansatz (HVA) for the AIM \cite{libbi2022effective}. In particular, the hardware embedding of the SPA attempts to exploit the sparsity structure of Eq.~\ref{eq:AIM} wherein electrons can undergo impurity-impurity and impurity-bath hopping, but not bath-bath hopping, as the bath is generally assumed to be diagonal. On the other hand, the SPA will leverage hardware connections, such as $ZZ(\theta_i)$ gates between the spin up and spin down qubits of bath sites B1 and B2 in Fig.~\ref{fig:ansatz}b, which would not be present in the HVA, as it would only couple spin up and down qubits on the impurity. The effect of these inclusions is to increase expressivity while maintaining symmetry constraints.

\begin{figure*}[]
    \centering
    \includegraphics[width=\linewidth]{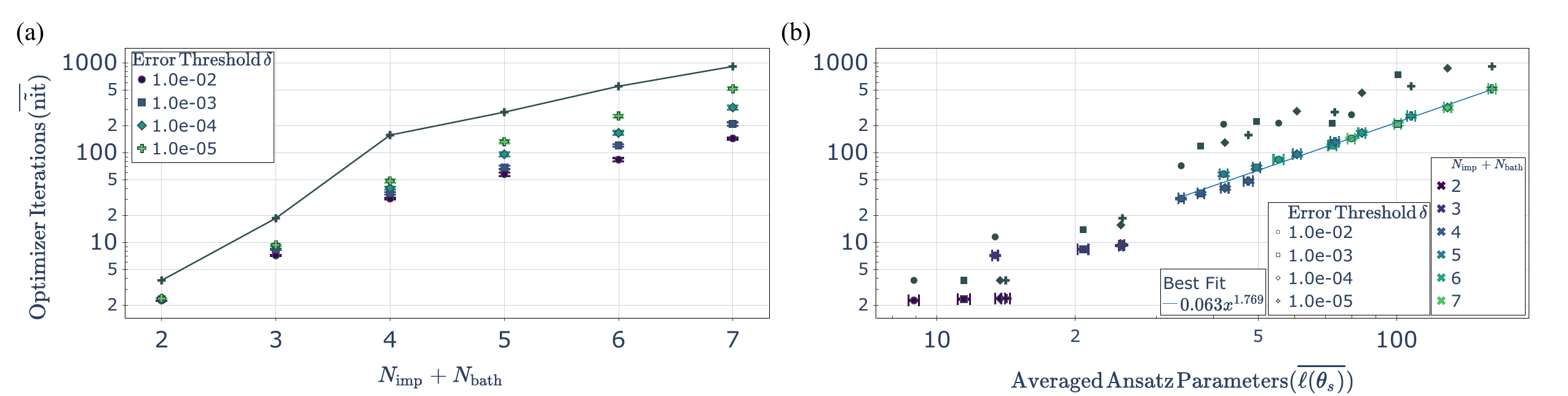}
    \caption{Ground state trainability of the symmetry-preserving ansatz on a two-dimensional planar qubit topology for the single impurity Anderson model. (a) On a semi-log plot, $\overline{\Tilde{\text{nit}}}$ is the Hamiltonian seed-averaged number of optimizer iterations, totaled over all charge-spin sectors, required to converge to a target overlap error, $\delta$, for a particular site number, $N_{\text{imp}}+N_{\text{bath}}$, normalized by the number of charge-spin sectors searched over for the variational minimization, which is $2 \times (N_{\text{imp}}+N_{\text{bath}} + 1)^2$. Error bars represent standard error of the mean over Hamiltonian seeds. Solid black curve is a visual cue for worst-case scaling at the $10^{-5}$ error threshold. (b) $\overline{\Tilde{\text{nit}}}$ as a function of $\overline{\ell(\pmb{\theta}_s)}$, which is the ``Averaged Ansatz Parameters''-- the seed-averaged length of each $\bm{\theta}_s$ vector, where the length is given by $d \times (N_e + N_q)$, $N_e$ being the number of edges in the SPA, and where the average is taken over a Hamltonian parameter ensemble labeled by a target error threshold and a site number, ($N_{\text{bath}} + N_{\text{imp}}$). Blue line is a power-law fit and the black markers above the main fit are worst-case instances.} \label{fig:optimizer_iterations}
\end{figure*}

Fig.~\ref{fig:ansatz}d demonstrates how the SPA can be expanded to include more than one impurity site, as well as to utilize longer-range gate interactions, in this case represented by the next-nearest neighbor (NNN) green dashed lines. Here again, the SPA corresponds closely to the HVA with explicit hopping terms representable between both impurity and both bath orbitals. NNN interactions could also generate $ZZ(\theta_i)$ interaction gates between the two impurity orbitals for AIMs with not stricly site-local electronic interactions. Generalizing further, six bath orbitals could be coupled to each of the impurity orbitals in Fig.~\ref{fig:ansatz} given NNNN connectivity (distance-2 plus shapes). Quasiplanar neutral atom platforms are particularly well-suited to such generalizations, with the Rydberg interaction able to naturally generate long-range interactions \cite{baker2021exploiting, RevModPhys.82.2313}.

Let $\ket{\boldsymbol{0}} \equiv \ket{0}^{\bigotimes N_q}$. We denote a variational trial state prepared by the SPA as

\begin{equation}\label{eq:ns_trial}
\begin{split}
\ket{N,S_z;\boldsymbol{\theta}} &= U_{(N, S_z)}(\boldsymbol{\theta}) \ket{\boldsymbol{0}}, \\ U_{(N, S_z)}(\boldsymbol{\theta}) &= U_{SPA}(\boldsymbol{\theta}) \prod_{\mu \in init. ex.} X_{\mu},
\end{split}
\end{equation}
where $\boldsymbol{\theta} = (\theta_1, \ldots, \varphi_1, \ldots, \varphi_{N_q}, \ldots, \varphi_{d\times N_q})$, $U_{SPA}(\boldsymbol{\theta})$ is a circuit of the form defined in Fig.~\ref{fig:ansatz} (implicitly defined to some depth, $d$), and $U_{(N, S_z)}(\boldsymbol{\theta})$ includes the initial bit flips to initialize the ansatz in the correct charge-spin sector. An approximation for the global ground state is then determined via minimizing the symmetry-constrained Hamiltonian minimization over all charge-spin sectors:

\begin{equation}\label{eq:ground_state}
\ket{\widetilde{GS}} = \min_{N, S_z} \Big[ \min_{\boldsymbol{\theta}} \braket{N, S_z; \boldsymbol{\theta}|H|N, S_z; \boldsymbol{\theta}} \Big].
\end{equation}
In order to perform the minimization in Eq.~\ref{eq:ground_state} in a fully symmetry-preserving fashion, one needs to be able to compute Hamiltonian expectation values from measurements, i.e., operator strings, that also commute with Eq.~\ref{eq:symmetries}. Fortunately, the structure of Eq.~\ref{eq:AIM} also facilitates this. Ignoring terms that are proportional to the identity, there are only two types of terms in Eq.~\ref{eq:AIM} after performing the J-W transformation. The first type of term is that which can be computed from reading out all qubits in the computational $Z$-basis. These are the terms that come from the Hubbard-$U$ interaction(s) and diagonal terms of $h_{ij}$ in $H_{imp}$ as well as every term in $H_{bath}$. Any string of Pauli-$Z$ operators commutes with Eq.~\ref{eq:symmetries}. The other type of terms are the hopping terms, which appear in $H_{hyb}$ and include the off-diagonal elements of $h_{ij}$ in $H_{imp}$. A hopping term has the general form $\propto (X_{\mu} X_{\nu} + Y_{\mu} Y_{\nu}) \prod_{\mu < \rho < \nu} Z_{\rho}$, which means its expectation value can be evaluated by performing a single Givens rotation, $G(-\pi/4)$, on the qubit pair $(\mu, \nu)$, followed by readout of all qubits in parallel. This readout scheme corresponds to measuring the operator $ (1/2)(X_{\mu} X_{\nu} + Y_{\mu} Y_{\nu}) \prod_{\rho \neq \mu, \nu} Z_{\rho}$, which also commutes with Eq.~\ref{eq:symmetries} and whose eigenvectors thus conserve charge and spin on the whole qubit array. In fact, multiple hopping terms can be measured in parallel in this manner as long as the qubits that the associated Givens rotations affect do not overlap. Measuring the expectation values of Eq.~\ref{eq:AIM} in this way, with only $Z$-basis and hopping circuits, ensures symmetries are preserved and as an added benefit enables the use of symmetry-based post-selection as an error mitigation technique \cite{arute2020observation, jones2022small, PhysRevResearch.5.033082}. In the absence of measurement circuit parallelization, the number of circuits required to measure the expectation value of Eq.~\ref{eq:AIM} is $1+2N_{imp}N_{bath}+N_{imp}(N_{imp}-1)$. In the presence of measurement circuit parallelization, the spin-up and spin-down terms can be measured in parallel, reducing each of the latter two terms by a factor of two. Either way, an upper bound of $O(N_{imp}N_{bath})$ circuits is required to compute the Hamiltonian expectation value, assuming $N_{\text{bath}} > N_{\text{imp}}$ as is nearly always the case in DMFT. Furthermore, if the parity terms in each measurement circuit are ignored, parallelized measurement can first be performed on the hopping terms between the impurity and bath orbitals, each hopping term corresponding to an edge in a complete, bipartite graph. The chromatic index of such a graph is $\max(N_{\text{imp}},N_{\text{bath}})$-- the maximum degree of any node therein. Meanwhile, each hopping term between two impurity orbitals constitutes an edge in a complete graph. The chromatic index of a complete graph is either $N_{\text{imp}}$ or $N_{\text{imp}}-1$ depending on if $N_{\text{imp}}$ is odd or even, respectively. Hence, a loose lower bound on the number of parallelized measurement circuits that need to be run to compute the expectation value of Eq.~\ref{eq:AIM} scales as $1+N_{\text{bath}}+N_{\text{imp}}+ (-1) \sim \omega(N_q)$. However, we note that the time complexity of measuring these observables can be converted to space complexity by using $\omega(N_q) < N_{\text{meas.}} \leq O(N_{\text{imp}}N_{\text{bath}})$ measurement circuits run trivially in parallel on $N_{\text{meas.}}$ smaller processors. In future work, it would be interesting to investigate the applicability of recent number conserving or fermionic shadow tomography techniques to measuring expectation values \cite{hearth2023efficient, king2024triply}.

Fig.~\ref{fig:vqe_vs_sites} summarizes the finite-size expressivity of the SPA, Eq.~\ref{eq:ns_trial}, in its ability to solve for the ground state of Eq.~\ref{eq:AIM} via the minimization condition in Eq.~\ref{eq:ground_state}. For AIMs with $N_{\text{imp}}+N_{\text{bath}} \in \{2, 3, 4, 5, 6, 7\}$ (corresponding to $N_q \in \{4, 6, 8, 10, 12, 14 \}$) with a single impurity, we draw 150 AIM Hamiltonians from uniform parameter distributions $h_{ij} \in [-5, 5]$, $U_i \in [1, 10]$, $V_{ib} \in [-5, 5]$, and $\epsilon_b \in [-5, 5]$ (all energies being in electronvolts, eV). These distributions reflect a range of impurity parameters typically associated with performing DMFT on real materials. We then compute the ansatz depth, $d$, required to converge the overlap of the variational ground state estimate with the true ground state to below some target error, $\delta = 1 - |\braket{\widetilde{GS}|GS}|$. $|\braket{\widetilde{GS}|GS}|$ is the Uhlmann fidelity, which has the interpretation as a distance metric between two rays in Hilbert space \cite{manenti2023quantum}. Fig.~\ref{fig:vqe_vs_sites} shows the scaling of the Hamiltonian parameter-averaged depth, $\langle d \rangle_{\text{param.}}$, as a function of the number of AIM sites, $N_{\text{imp}}+N_{\text{bath}}$, for various target ground state errors $\delta \in \{10^{-2}, 10^{-3}, 10^{-4}, 10^{-5} \}$. Hamiltonian construction and exact diagonalization were performed using \textit{OpenFermion} \cite{mcclean2020openfermion}, and all quantum circuits, compiled to common elementary gates as shown in Fig.~\ref{fig:ansatz}e-f, were simulated in the absence of shot or gate noise using the \textit{Qulacs} circuit simulation framework \cite{suzuki2021qulacs}. To perform the classical minimiztion within each charge-spin sector we used the \textit{SciPy} implementation of the Broyden–Fletcher–Goldfarb–Shanno (BFGS) algorithm \cite{virtanen2020scipy}. Of the 150 sets of Hamiltonian parameters seeded for each system size, $\{ 123, 135, 142, 140, 146, 144 \}$ seeds were used, respectively, to compute data points and error bars, which represent standard error of the mean. All discarded seeds, other than three seeds at seven sites discarded due to the failure of the optimizer to converge after hitting an emulation wall-time, involved a degeneracy in the ground state calculated by exact diagonalization. In principle, our ground state preparation method can be extended to ground states with a degeneracy by creating superpositions of charge-spin sector ground states with equivalent energies, but for the sake of simplicity, we discard these classical results.

While it is difficult to make conclusions regarding asymptotic performance based on finite-size simulations, Fig.~\ref{fig:vqe_vs_sites} suggests that the SPA has the generic capacity to prepare ground states on $N_q$ qubits in low-order polynomial depth---roughly linear for the most stringent target ground state overlap error. Hence it can be regarded as efficient in its expressibility. For the purposes of variational state preparation, however, trainability of an ansatz is also important. To assess trainability of the SPA, we define a quantity, $\overline{\Tilde{\text{nit}}}$, which is the Hamiltonian seed-averaged number of BFGS optimizer iterations, totaled over all charge-spin sectors, required to converge to a target overlap error, $\delta$, for a particular site number, $N_{\text{imp}}+N_{\text{bath}}$, normalized by the number of charge-spin sectors searched over for the variational minimization, which is $2 \times (N_{\text{imp}}+N_{\text{bath}} + 1)^2$. Because we have already accounted for the quadratic overhead in searching through the different charge-spin sectors, we are interested in the ability of a classical optimizer to determine the ground state in some typical sense within any charge-spin sector (averaged over many AIM Hamiltonian realizations). This is what the scaling of $\overline{\Tilde{\text{nit}}}$ addresses as a function of system size in Fig.~\ref{fig:optimizer_iterations}a, which shows sub-exponential scaling for all target error thresholds on a semi-log plot. To more precisely assess the scaling of $\overline{\Tilde{\text{nit}}}$ as a function of ansatz depth, we define another quantity, $\overline{\ell(\pmb{\theta}_s)}$, which is the ``Averaged Ansatz Parameters''---the seed-averaged length of each $\bm{\theta}_s$ vector, where the length is given by $d \times (N_e + N_q)$, $N_e$ being the number of edges in the SPA, and where the average is taken over a Hamltonian parameter ensemble labeled by a target error threshold and a site number, ($N_{bath} + N_{imp}$). Fig.~\ref{fig:optimizer_iterations}b shows how $\overline{\Tilde{\text{nit}}}$ scales with $\overline{\ell(\pmb{\theta}_s)}$ on a log-log plot. Remarkably, the number of optimizer calls required to find the ground state as a function of the number of ansatz parameters scales roughly as a power law on average with fitted exponent $\sim 1.769$.

Taken together, Figs.~\ref{fig:vqe_vs_sites}-\ref{fig:optimizer_iterations} indicate that the SPA is efficient in both its expressibility and trainability in preparing ground states of the single-impurity Anderson model as the discretization of the bath becomes more fine-grained. Namely, Fig.~\ref{fig:vqe_vs_sites}a indicates that the average-case depth to reach the ground state is no worse than roughly linearly-scaling. In Fig.~\ref{fig:vqe_vs_sites}b we show that for the most stringent error threshold considered, $\delta=10^{-5}$, the worst-case scaling in depth is exactly linear after (or excluding) $N_{\text{imp}}+N_{\text{bath}}=3$, which is consistent with near-saturation of the Lieb-Robinson bound \cite{lieb1972finite, bravyi2006lieb}. Moreover, given the numerical evidence in Fig.~\ref{fig:optimizer_iterations}, we conjecture that the number of optimizer calls scales, in the average case, as not much worse than $\sim N_q^{3.538}$ in the number of sites, since the number of ansatz parameters scales quadratically in the number of sites (one factor per layer and linear depth to represent the ground state). Note also that worst-case scaling over the parameter set in Fig.~\ref{fig:optimizer_iterations} scales similarly---being shifted up by a constant factor. Hence, if one is provided a reliable reference state---that is, a known charge-spin sector within which to search for the ground state---as might be determined by Hartree-Fock or a higher-fidelity mean-field theory \cite{jiang2018quantum, sun2020recent, PhysRevLett.96.226402}, $O(1)$ fixed space-parallelized measurements, and parallelized gate operation within each ansatz layer (as is usually assumed), then the finite-size scaling observed is consistent with $\Theta(N_q^{4.538})$ runtime to prepare the ground state of the AIM. Interestingly, this heuristic, average-case scaling is somewhat worse than Bravyi and Gossett's result, which established an algorithm for estimating the ground state energy of quantum impurity models with runtime $O(N_{\text{bath}}^3)\exp[O(N_{\text{imp}}\log^3(N_{\text{imp}}\gamma^{-1}))]$ \cite{bravyi2017complexity}. For $N_{\text{imp}}=1$, like is used in this work, this scaling reduces to $O(N_q^3)$ in the number of sites (and is quasi-linear in $\gamma^{-1}$). However, ground-state preparation is only one of the core subroutines involved in performing the impurity component of DMFT computations, the other being one-particle Green's function construction. Without an efficient method to compute Green's functions classically, an efficient classical ground state preparation algorithm alone is insufficient to perform DMFT efficiently. Hence, we now discuss an application of our ansatz to the problem of computing Green's functions via the continued fraction representation in frequency space. Subsequently, we provide an efficient method for computing $m$-point correlation functions directly in the time domain.
 
\section{\label{sec:spkvqa}Symmetry-Preserving Krylov Variational Quantum Algorithm}

In DMFT, once a ground state of the AIM has been prepared, the single-particle Green's function must be computed in order to enforce self-consistency of the impurity self-energy with the local bath self-energy \cite{kotliar2006electronic}. The KVQA aims to compute the impurity retarded Green's functions,

\begin{equation}\label{eq:gf}
G^R_{i\sigma}(z) = ||c^{\dagger}_{i\sigma}\ket{GS}||^2 g_{\phi^+_{i\sigma}}(z) - ||c_{i\sigma}\ket{GS}||^2 g_{\phi^-_{i\sigma}}(-z),
\end{equation}
via their continued fraction expansion,

\begin{equation}\label{eq:cfe}
g_{\phi}(z) = \bra{\phi} [z - \tilde{H}]^{-1} \ket{\phi} = \frac{1}{z - a_0 - \frac{b_1^2}{z - a_1 - \frac{b_2^2}{z - a_2 - \ldots}}},
\end{equation}
in terms of Krylov basis coefficients, $a_n$ and $b_n$, and where $z$ is a complex frequency and $\tilde{H} = H - E_{GS}$ \cite{jamet2021krylov}. Eq.~\ref{eq:gf} is taken at zero temperature, but the following arguments can be generalized to thermal distributions. The $a_n$ and $b_n$ are obtained from Lanczos iterations,

\begin{equation}\label{eq:li}
\begin{split}
b_{n}^2 &= \bra{\chi_{n-1}}\tilde{H}^2 \ket{\chi_{n-1}} - a_{n-1}^2 - b_{n-1}^2\\
\ket{\chi_n} &= \frac{1}{b_n}[(\tilde{H} - a_{n-1})\ket{\chi_{n-1}} - b_{n-1}\ket{\chi_{n-2}}]\\
a_n &= \bra{\chi_n} \tilde{H} \ket{\chi_n},
\end{split}
\end{equation}
with $b_0=0$, $\ket{\chi_0}=\ket{\phi}$, and $a_0=\bra{\chi_0} \tilde{H} \ket{\chi_0}$. Two sets of Lanczos iterations need to be performed in order to compute Eq.~\ref{eq:gf}, one for each of the initial Krylov vectors, $\ket{\phi^+_{i\sigma}} = c^{\dagger}_{i\sigma}\ket{GS}/||c^{\dagger}_{i\sigma}\ket{GS}||$ and $\ket{\phi^-_{i\sigma}} = c_{i\sigma}\ket{GS}/||c_{i\sigma}\ket{GS}||$.

First, we provide a new method to compute initial Krylov vectors. In prior work, $\ket{\phi^{\pm}_{i\sigma}}$ have been computed variationally, which is unintuitive since the application of a single creation or annihilation operator to the ground state wavefunction is a non-unitary operation \cite{jamet2021krylov, PhysRevB.107.165155}. Many quantum processors now offer the ability to measure qubits mid-circuit and reset them \cite{graham2023mid, PhysRevX.13.041034, google2023suppressing}. Meanwhile, creation and annihilation operators can be expressed as $c^{\dagger}_{i \sigma} \propto (\prod_{\nu < (i,\sigma)} Z_{\nu}) X_{i\sigma} P_{i\sigma}(0)$ and $c_{i \sigma} \propto (\prod_{\nu < (i,\sigma)} Z_{\nu}) X_{i\sigma} P_{i\sigma}(1)$, where $P_{i\sigma}(0)$ and $P_{i\sigma}(1)$ are projectors onto the impurity qubit $\ket{0}$ and $\ket{1}$ state, respectively. So, in order to prepare each initial Krylov vector, $\ket{\phi^{+}_{i\sigma}}$ ($\ket{\phi^{-}_{i\sigma}}$), one prepares the ground state of Eq.~\ref{eq:AIM} via VQE with the SPA, measures the state of the relevant impurity qubit until $\ket{0}$ ($\ket{1}$) is measured, and then applies the operator $(\prod_{\nu < (i,\sigma)} Z_{\nu}) X_{i\sigma}$. If after $M$ tries, the result $\ket{0}$ ($\ket{1}$) is not observed, it implies that the associated norm in Eq.~\ref{eq:gf} is zero with certainty $\sim 1/\sqrt{M}$, and the corresponding Lanczos iterations and continued fraction do not need to be computed. Also note that by collapsing the wavefunction of the impurity qubit via mid-circuit measurement, the initial Krylov vectors automatically become normalized. In order to compute the unrenormalized modulus, one can use the ground state wavefunction: $||c^{\dagger}_{i\sigma}\ket{GS}||^2 = (1+\bra{GS} Z_{i\sigma} \ket{GS})/2$ and $||c_{i\sigma}\ket{GS}||^2 = (1-\bra{GS} Z_{i\sigma} \ket{GS})/2$ as in \cite{jamet2021krylov}.

\begin{figure}
\centering
\includegraphics[width=\linewidth]{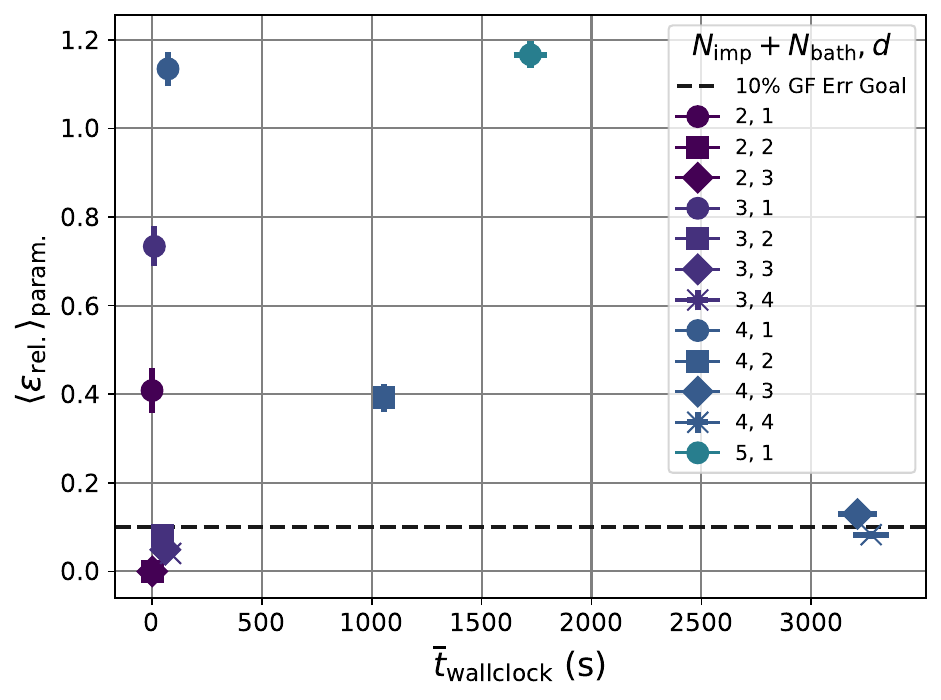}
\caption{\label{fig:gf_relative_errors} Expressivity of the symmetry-preserving ansatz for computing Lanczos vectors. $\langle \epsilon_{\text{rel.}} \rangle_{\text{param.}}$ is the relative error between the classical and variationally prepared Green's functions, averaged over Hamiltonian parameters and $\Bar{t}_{\text{wallclock}}$ is the parameter-averaged supercomputer wallclock time, in seconds, required to converge all of the Lanczos interations and compute the resulting Green's function. Error bars are one standard error of the mean over 100 parameter sets. Purple, violet, blue, and green points represent 2, 3, 4, and 5 sites, respectively, while circles, squares, diamonds, and stars represent a depth of 1, 2, 3, and 4 respectively. The black dashed line demarks the $10\%$ relative error threshold. Note that linear SPA depth is sufficient to prepare all Green's functions to within $10\%$ relative error up to 4-sites.}
\end{figure}

If $(N^{GS}, S_z^{GS})$ is the charge-spin sector in which the ground state of Eq.~\ref{eq:AIM} resides, then the two associated Krylov subspaces in which the Lanczos iterations will be performed are related via $(N^{\pm}, S_z^{\pm}) = (N^{GS} \pm 1, S_z^{GS} \pm (\delta_{\sigma \uparrow}- \delta_{\sigma \downarrow}))$. In Ref.~\cite{jamet2021krylov}, Jamet et al. observed that a sufficient condition to compute the second line in Eq.~\ref{eq:li} is to find a variational state, $\ket{\chi(\boldsymbol{\theta}_n)}$, such that $\braket{\chi(\boldsymbol{\theta}_n)|\chi(\boldsymbol{\theta}_{n-2})} = \braket{\chi(\boldsymbol{\theta}_n)|\chi(\boldsymbol{\theta}_{n-1})} = 0$ and $\braket{\chi(\boldsymbol{\theta}_n)|H|\chi(\boldsymbol{\theta}_{n-1})}=b_n$, where $\ket{\chi(\boldsymbol{\theta}_{n-2})}$ and $\ket{\chi(\boldsymbol{\theta}_{n-1})}$ are the corresponding variational states at Lanczos iterations $n-1$ and $n-2$, respectively. The variational Lanczos state can be obtained at each iteration by minimization of the cost function 
\begin{equation}\label{eq:lanczos_cost}
\begin{split}
    F(\boldsymbol{\theta}_n) = &\lambda_1(|\braket{N^{\pm}, S_z^{\pm}; \boldsymbol{\theta}_n | H | N^{\pm}, S_z^{\pm}; \boldsymbol{\theta}_{n-1}}| - |b_n|)^2\\
    &+ \lambda_2 |\braket{N^{\pm}, S_z^{\pm}; \boldsymbol{\theta}_n| N^{\pm}, S_z^{\pm}; \boldsymbol{\theta}_{n-1}}|^2 \\
    &+ \lambda_3 |\braket{N^{\pm}, S_z^{\pm}; \boldsymbol{\theta}_n| N^{\pm}, S_z^{\pm}; \boldsymbol{\theta}_{n-2}}|^2
\end{split}
\end{equation}
%
%
where all parametrized unitary evolutions are performed according to Eq.~\ref{eq:ns_trial} in the correct $(N^{\pm}, S_z^{\pm})$-constrained Krylov subspaces, and $(\lambda_1, \lambda_2, \lambda_3)$ are optional Lagrange multipliers.

We compute the Lanczos iterations both classically via Eq.~\ref{eq:li} and quantum-variationally using Eq.~\ref{eq:lanczos_cost} and use Eqs.~\ref{eq:gf}-\ref{eq:cfe} to construct exact, $G^{R, \text{exact}}_{I, \uparrow}(\omega + i \eta)$, and variational approximations to, $G^{R, \text{var.}}_{I, \uparrow}(\omega + i \eta)$, the retarded impurity Green's function. We choose the spin-up orbital without loss of generality due to the system's $\mathbb{Z}_2$ symmetry; and $\eta=0.1$ is a small line-broadening parameter for visual aid, as is typically used \cite{selisko2024dynamical}. To assess the quality of the variational Lanczos method as computed by noiseless emulation, we compute the relative error between the exact and quantum-variationally computed Green's functions
\begin{equation}\label{eq:rel_err}
    \epsilon_{\text{rel.}} = \frac{|G^{R, \text{var.}}_{I, \uparrow}(\omega + i \eta) - G^{R, \text{exact}}_{I, \uparrow}(\omega + i \eta)|}{|G^{R, \text{exact}}_{I, \uparrow}(\omega + i \eta)|}.
\end{equation}
Fig.~\ref{fig:gf_relative_errors} indicates the expressivity of our ansatz in variationally preparing the Lanczos vectors used to construct the Green's function. Specifically, the SPA at linear depth, i.e., $N_{\text{bath}}+N_{\text{imp}} = d$ is sufficient to compute all Green's functions to within a seed-averaged relative error of $10\%$ up to $N_{\text{bath}}+N_{\text{imp}}=4$, corresponding to eight qubits. The average is taken over Hamiltonian parameter seeds. Fig.~\ref{fig:combined_gfs} demonstrates that a $10\%$ relative error in the Green's function is often sufficient to capture a large majority of the quantitatively important features of the impurity response. We remark, however, that more data is needed at larger system sizes to draw better conclusions about whether linear ansatz depth is in general sufficient to prepare all the Lanczos vectors required to compute Green's functions to reasonable accuracy. Moreover, the trainability of our ansatz using Eq.~\ref{eq:lanczos_cost} as a cost function remains an open question, with the linear-depth, 4-site Green's function computation taking roughly an order of magnitude longer than the linear-depth, 3-site computation as measured by supercomputer wallclock mean time. In particular, one can see that in the 4-site case, even though the Krylov subspace dimension is being kept fixed, the wallclock time increases drastically as the depth of the ansatz is increased, indicating that trainability is degrading rather than the number of Lanczos vectors being the cause of prolonged runtime. However, given that the SPA is efficiently trainable for ground state preparation, it is unclear if the form of the cost function in Eq.~\ref{eq:lanczos_cost}, or the ansatz itself, or the interplay between the two, is the cause of degraded trainability. We include the 5-site, $d=1$ point for reference, but note that the simulation wallclock time becomes untenable for the $(5, 2)$ and larger cases. Hence expanding our results in the future would benefit from more sophisticated circuit simulation techniques like tensor network methods run on GPUs. These methods, however, will not help with the combinatorial proliferation of Lanczos vectors in the Krylov subspaces of larger system size.

In Fig.~\ref{fig:combined_gfs} we examine the relationship between the qualitative and quantitative performance of variational Green's function preparation for a selection of Hamiltonian seeds at $N_{\text{imp}}+N_{\text{bath}}=d=4$. Seeds 18 (Fig.~\ref{fig:combined_gfs}a) and 84 (Fig.~\ref{fig:combined_gfs}b) both exhibit both very good quantitative and qualitative accuracy, with a relative error of $\sim .68\%$ and $\sim 5.7\%$, respectively. The relative error in seed 84 arises from small amplitude fluctuations near the Fermi level. Seed 24 (Fig.\ref{fig:combined_gfs}c) exhibits a relative error of $\sim 30\%$ arising mainly from small shifts in peak positions and bifurcation in the main peak near $\sim 10$ eV. Finally, seed 62 (Fig.~\ref{fig:combined_gfs}) represents an example of a variationally prepared Green's function that, despite its large relative error of $\sim 84\%$ due to small amplitude fluctuations, peak shifts, and peak bifurcations, retains a good qualitative description of the response function.

Figs.~\ref{fig:gf_relative_errors}-\ref{fig:combined_gfs} suggest that at even larger system sizes, the SPA may continue to be a useful tool to prepare the many-body states necessary to approximate one-particle Green's functions in the frequency domain, either via the Lanczos method or in other representations like the Lehmann representation. However, using a quantum processor to prepare many-body states to compute Green's functions in the frequency domain is ultimately less efficient than using a quantum processor to compute Green's functions directly in the time domain. There is a rigorous exponential separation between quantum and classical computers' ability to perform time evolution, while there often at best only a polynomial separation for quantum state preparation \cite{lee2023evaluating}. In the next section we propose a new algorithm for computing arbitrary $m$-point time correlation functions on a quantum computer.

\begin{figure*}
\centering
\includegraphics[width=.85\linewidth]{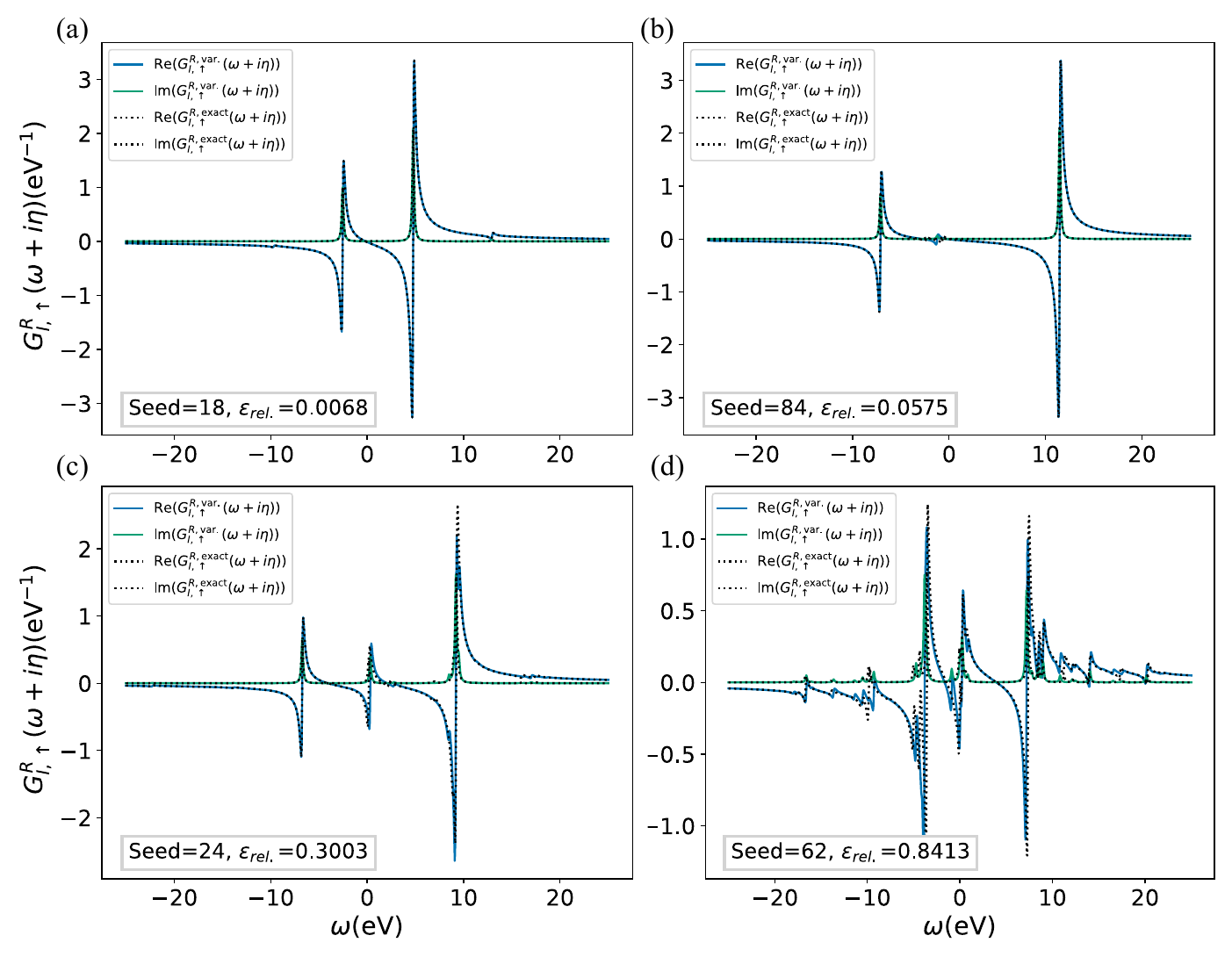}
\caption{\label{fig:combined_gfs} Quantitative and qualitative accuracy of variationally prepared Green's functions. All subplots are for $N_{\text{imp}}+N_{\text{bath}}=d=4$. Retarded Green's functions are computed via the continued fraction representation. Solid lines refer to Lanczos vectors and coefficients computed variationally using the symmetry-preserving ansatz. Dashed lines refer to Lanczos vectors and coefficients computed exactly. (a) Hamiltonian parameter seed 18 has $\sim .68\%$ relative error and good very good qualitative agreement between the exact and variationally prepared Green's function. (b) Seed 84 has $\sim 5.7\%$ relative error and exhibits only small-amplitude fluctuations near the Fermi level. (c) Seed 24 has a $\sim 30\%$ relative error and  contains both small shifts in peak position as well as bifurcations in large amplitude peaks. (d) Seed 62 has a relative error of $\sim 84\%$ due to amplitude fluctuations, peak shifts, and peak bifurcations, and yet qualitatively reproduces many of the main features of the spectrum.}
\end{figure*}

\section{\label{sec:correlators} Computation of Many-Body Correlation Functions}

While the computation of the single-particle Green's function is sufficient in DMFT to enforce self-consistency with the weakly-correlated bath theory, higher-order correlation functions are often needed to accurately model properties like magnetic susceptibilities and electron-hole propagators in realistic simulations of strongly-correlated materials \cite{martin2016interacting, acharya2023theory}. Prior deterministic theorems and algorithms have been provided to perform this computation on a quantum computer (see, for example, \cite{pedernales2014efficient, terhal2000problem}). Here we propose another such algorithm and provide an explicit circuit construction, which combines unitary time evolution, mid-circuit measurement-conditioned operations, and a modified Hadamard test.

Consider a collection of real-time fermionic operators, $\{f_{\alpha_j}(t_j) \: |\: j=1\ldots m\}$ in the Heisenberg picture
\begin{equation}\label{eq:heisenberg_fermion}
    f_{\alpha_j}(t) = e^{iHt} f_{\alpha_j} e^{-iHt},
\end{equation}
where each $f_{\alpha_j}$ stands either for a creation ($c_{\alpha_j}^{\dagger}$) or annihilation ($c_{\alpha_j}$) operator on the spin-orbital $\alpha_j$ in the Schrodinger picture. A generic $m$-point correlation function primitive, from which more complex objects like time-ordered correlators can be constructed, is
\begin{equation}\label{eq:general_correlator}
\begin{split}
G(m, \ldots, 1) &= \langle f_{\alpha_m}(t_m) f_{\alpha_{m-1}}(t_{m-1})\ldots f_{\alpha_1}(t_{\alpha_1}) \rangle \\
&= \bra{GS} e^{iHt_m} f_{\alpha_m} e^{-iH(t_m-t_{m-1})} \ldots \\
& \: \: \: \: \: \indent \indent \:  \ldots e^{-iH(t_2-t_1)} f_{\alpha_1} e^{-iHt_1} \ket{GS}.
\end{split}
\end{equation}
We compute correlations in the ground state, leaving generalizations to thermal states to future work. The problem with computing generic $m$-point correlators is that the operator string in Eq.~\ref{eq:general_correlator} is not necessarily unitary, precluding any straightforward application of the Hadamard test. Our approach is therefore to compute Eq.~\ref{eq:general_correlator} by modifying the Hadamard test to incorporate controlled operators that can be constructed from common operations available on a quantum computer, which is to say \textit{both} unitary operators \textit{and} projective measurements.


The first step in this modification is to make the following definition,
\begin{equation}\label{eq:normed_fermion}
\Tilde{f}_{\alpha} \equiv \prod_{\nu < \alpha} Z_{\nu} X_{\alpha} P_{\alpha}(z),
\end{equation}
which was previously used in Sec.~\ref{sec:spkvqa} to compute initial Krylov vectors in the Jordan-Wigner representation. Here again, $Z_{\nu}$ and $X_{\alpha}$ are Pauli operators and $P_{\alpha}$ is a projector onto either the $\ket{z} = \ket{0}$ or $\ket{z}=\ket{1}$ state of orbital $\alpha$ depending on if $f_{\alpha}$ refers to a creation or annihilation operator, respectively. The action of Eq.~\ref{eq:normed_fermion} on a generic state $\ket{\psi}$ is
\begin{equation}\label{eq:nermion_action}
    \tilde{f}_{\alpha}\ket{\psi} = 
    \begin{cases}
        0 & \text{if \: $||f_{\alpha}\ket{\psi}|| = 0$}\\
        f_{\alpha}\ket{\psi}/||f_{\alpha}\ket{\psi}|| & \text{else},
    \end{cases}
\end{equation}
which can be re-written compactly as
\begin{equation}\label{eq:fermion_action}
    f_{\alpha}\ket{\psi} = ||f_{\alpha}\ket{\psi}|| \times \tilde{f}_{\alpha}\ket{\psi}.
\end{equation}
We use Eq.~\ref{eq:fermion_action} to re-write the operator string Eq.~\ref{eq:general_correlator} in terms of the renormalized operators, $\{\tilde{f}_{\alpha_j}\}$.

\begin{theorem}\label{thm:correlator_theorem} 
    Take $t_0=0$ and $f_{\alpha_0}=1$ and let
    $\ket{\psi_j} \equiv \prod_{k=j}^1 e^{-iH(t_k-t_{k-1})}f_{\alpha_{k-1}}\ket{GS}$ be an un-normalized statevector. The $m$-point correlation function, Eq.~\ref{eq:general_correlator}, can be computed as
       \begin{equation}\label{eq:computable_correlator}
        G(m, \ldots, 1) = \tilde{G}(m, \ldots, 1) \prod_{j=m}^1 ||f_{\alpha_j}\ket{\psi_j}||,
    \end{equation}
    where all $m$ factors $||f_{\alpha_j}\ket{\psi_j}||$ can be computed recursively using $m$ quantum circuits to measure expectation values of simple local observables of the form $f^{\dagger}_{\alpha_j}f_{\alpha_j}$, and  if all such factors are non-zero, then the quantity $\tilde{G}(m,\ldots,1)=\langle \prod_{j=m}^1 \tilde{f}_{\alpha_j}(t_j) \rangle$ is computable by a straightforward application of the Hadamard test, since in this instance the operator string, $\prod_{j=m}^1 \tilde{f}_{\alpha_j}(t_j)$, is physically implementable.
\end{theorem}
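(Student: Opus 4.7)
The plan is to split the proof into three parts, matching the three claims of the theorem: the scalar factorization $G=\tilde{G}\prod_j\lVert f_{\alpha_j}\ket{\psi_j}\rVert$, the recursive computability of the norms from expectation values of local number-like operators, and the Hadamard-test implementability of $\tilde{G}$.

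For the factorization I would induct on how many of the $m$ operators on the right-hand side of Eq.~\ref{eq:general_correlator} have been converted from $f$ to $\tilde{f}$, peeling from the rightmost. The algebraic workhorse is Eq.~\ref{eq:fermion_action}, which reads $f_{\alpha}\ket{\psi}=\lVert f_{\alpha}\ket{\psi}\rVert\,\tilde{f}_{\alpha}\ket{\psi}$. The base case applies it to $f_{\alpha_1}e^{-iHt_1}\ket{GS}=f_{\alpha_1}\ket{\psi_1}$. At the inductive step, using the recursion $\ket{\psi_j}=e^{-iH(t_j-t_{j-1})}f_{\alpha_{j-1}}\ket{\psi_{j-1}}$ together with Eq.~\ref{eq:fermion_action}, one checks that after the first $j-1$ conversions the state immediately to the right of $f_{\alpha_j}$ is $\ket{\psi_j}$ divided by the product of the $j-1$ previously extracted norms. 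One further application of Eq.~\ref{eq:fermion_action} then peels off the $j$-th norm $\lVert f_{\alpha_j}\ket{\psi_j}\rVert$, exactly cancelling that accumulated denominator. Telescoping to $j=m$ leaves the advertised product of norms multiplied by $\bra{GS}\prod_{j=m}^{1}\tilde{f}_{\alpha_j}(t_j)\ket{GS}=\tilde{G}(m,\ldots,1)$.

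For the norms, note that $\lVert f_{\alpha_j}\ket{\psi_j}\rVert^{2}=\bra{\psi_j}f^{\dagger}_{\alpha_j}f_{\alpha_j}\ket{\psi_j}$, and $f^{\dagger}_{\alpha_j}f_{\alpha_j}$ is either the on-site number operator on orbital $\alpha_j$ or its complement, i.e.\ a single-qubit $Z$-basis observable under Jordan--Wigner. Because $\ket{\psi_j}$ carries an unknown overall scale $\prod_{k<j}\lVert f_{\alpha_k}\ket{\psi_k}\rVert$ from the preceding iterations, the quantum subroutine instead prepares the unit vector $\ket{\tilde{\psi}_j}\propto\ket{\psi_j}$ by alternating the unitary evolutions $e^{-iH(t_k-t_{k-1})}$ with the mid-circuit-measurement realizations of $\tilde{f}_{\alpha_k}$ introduced in Sec.~\ref{sec:spkvqa}; measuring the single-qubit number observable on $\ket{\tilde{\psi}_j}$ and rescaling by the already-known accumulated scalar yields $\lVert f_{\alpha_j}\ket{\psi_j}\rVert$. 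Iterating $j=1,\ldots,m$ gives exactly $m$ circuits, as claimed.

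For the third part, the assumption that every $\lVert f_{\alpha_j}\ket{\psi_j}\rVert$ is nonzero guarantees that each projective $\tilde{f}_{\alpha_j}$ step succeeds with strictly positive probability, so $\bigl(\prod_{j=m}^{1}\tilde{f}_{\alpha_j}(t_j)\bigr)\ket{GS}$ is a well-defined unit vector, preparable by the same interleaved sequence of controlled-unitary time evolutions and measurement-conditioned Paulis. I would then run a Hadamard test whose $\ket{0}$-branch prepares $\ket{GS}$ and whose $\ket{1}$-branch prepares this image state, reading off $\mathrm{Re}\,\tilde{G}$ directly and $\mathrm{Im}\,\tilde{G}$ via a relative $S$ gate on the ancilla. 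The delicate point, and the main technical obstacle, is that the $\tilde{f}_{\alpha_j}$ are non-unitary, so ``controlled-$\tilde{f}_{\alpha_j}$'' is not immediately defined and the textbook controlled-$U$ Hadamard test cannot be applied verbatim. I would resolve this by promoting each projection to a coherent controlled projection using a controlled SWAP onto a fresh ancilla followed by deferred measurement and a classically conditioned Pauli correction, and then verify both that this leaves the Hadamard-ancilla $\ket{0}$-branch undisturbed and that the measurement success probabilities---which are precisely the squared norms already accounted for in step~2---do not reappear inside $\tilde{G}$ and cause a double counting of factors.
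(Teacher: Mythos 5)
Your proposal is correct and follows essentially the same route as the paper's proof: an induction that extracts each norm via $f_{\alpha}\ket{\psi} = \lVert f_{\alpha}\ket{\psi}\rVert\,\tilde{f}_{\alpha}\ket{\psi}$, the identical recursive circuit scheme for computing the norms from expectation values of $f^{\dagger}_{\alpha_j}f_{\alpha_j}$ on normalized intermediate states, and a modified Hadamard test with ancilla-assisted controlled projections for $\tilde{G}$. The only cosmetic difference is that the paper peels operators from the outermost ($j=m$) inward, so each conversion acts directly on the un-normalized $\ket{\psi_j}$ and no accumulated denominators need to be tracked, whereas your rightmost-first ordering requires the normalization bookkeeping you describe (which does close correctly).
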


\begin{proof}
    First, we show that Eq.~\ref{eq:general_correlator} can be re-written as Eq.~\ref{eq:computable_correlator}. We show this recursively via construction, beginning with Eq.~\ref{eq:general_correlator} expressed as
    \begin{equation}
        G^{(m)} = \bra{GS} e^{iHt_m} f_{\alpha_m} \ket{\psi_m},
    \end{equation}
    where we have assigned the shorthand $G(m, \ldots, 1)\rightarrow G^{(m)}$. Using Eq.~\ref{eq:nermion_action}, consider the action of the last fermionic operator on $\ket{\psi_m}$ as the base case, and noting that norms commute through operators:
    \begin{equation}
    \begin{split}
        e^{iHt_m} &f_{\alpha_m}\ket{\psi_m} = ||f_{\alpha_m}\ket{\psi_m}|| \times e^{iHt_m} \tilde{f}_{\alpha_m} \ket{\psi_m}\\
        &= ||f_{\alpha_m}\ket{\psi_m}|| \times \tilde{f}_{\alpha_m}(t_m) e^{iHt_{m-1}} f_{\alpha_{m-1}} \ket{\psi_{m-1}}
    \end{split}
    \end{equation}
    Now, assume that after $m-l+1$ replacements we have
    \begin{equation}
    \begin{split}
        G^{(m)} = \bra{GS} \prod_{j=m}^{m-l+1} &||f_{\alpha_j}\ket{\psi_j}|| \tilde{f}_{\alpha_j}(t_j) \times \\
        &\times e^{iHt_{m-l}} f_{\alpha_{m-l}} \ket{\psi_{m-l}}\\
        = \bra{GS} \prod_{j=m}^{m-l} &||f_{\alpha_j}\ket{\psi_j}|| \tilde{f}_{\alpha_j}(t_j) \times \\
        &\times e^{iHt_{m-l-1}} f_{\alpha_{m-l-1}} \ket{\psi_{m-l-1}},
    \end{split}
    \end{equation}
    which can be carried all the way to the $j=1$ case, resulting in Eq.~\ref{eq:computable_correlator}.
    
    Next, we prescribe how to recursively compute the norms in Eq.~\ref{eq:computable_correlator}. Consider, first, the $j=1$ base case
    \begin{equation}\label{eq:norm_base_case}
        ||f_{\alpha_1}\ket{\psi_1}|| = ||f_{\alpha_1}e^{-iHt_1}\ket{GS}||
    \end{equation}
    Eq.~\ref{eq:norm_base_case} is straightforwardly-computed as an expectation value of the local operator $f_{\alpha_1}^{\dagger}f_{\alpha_1}$ sampled from the output of the quantum circuit $e^{-iHt_1}\ket{GS}$. Sampling the circuit output $M$ times will give the expectation value with precision $\sim 1/\sqrt{M}$ from standard shot statistics. For illustration purposes consider the $j=2$ case:
    \begin{equation}\label{eq:norm_second_case}
    \begin{split}
        ||f_{\alpha_2}\ket{\psi_2}||&=||f_{\alpha_2}e^{-iH(t_2-t_1)}f_{\alpha_1}\ket{\psi_1}||\\
        &= ||f_{\alpha_2}e^{-iH(t_2-t_1)}\tilde{f}_{\alpha_1}\ket{\psi_1}||\times ||f_{\alpha_1}\ket{\psi_1}||.
    \end{split}
    \end{equation}
    Hence, we can compute Eq.~\ref{eq:norm_second_case} with the previously-computed result of Eq.~\ref{eq:norm_base_case} and by sampling the output of the new circuit, $e^{-iH(t_2-t_1)}\tilde{f}_{\alpha_1}e^{-iHt_1}\ket{GS}$, to compute the expecation value of the new local observable, $f_{\alpha_2}^{\dagger}f_{\alpha_2}$. This methodology generalizes so that if one has already computed the factors $\{||f_{\alpha_j}\ket{\psi_j}|| \: \text{for} \: j=1, \ldots, l-1\}$, then the $l^{\text{th}}$ factor can be computed as
    \begin{equation}\label{eq:norm_recursion}
    \begin{split}
        ||f_{\alpha_l}\ket{\psi_l}|| = ||f_{\alpha_l}&\prod_{k=l-1}^0 e^{-iH(t_{k+1}-t_k)}\tilde{f}_{\alpha_k}\ket{GS}||\\
        \times &\prod_{j=l-1}^1||f_{\alpha_j}\ket{\psi_j}||.
    \end{split}
    \end{equation}
    Altogether, evaluation of Eq.~\ref{eq:computable_correlator} requires sampling the output of $m$ quantum circuits a total of $M$ times each, as well as implementing the Hadamard test twice to compute the real and imaginary parts of $\tilde{G}^{(m)}$.
\end{proof}

From a practical perspective one uses Thm.~\ref{thm:correlator_theorem} in the following manner. First, the norms in Eq.~\ref{eq:computable_correlator} are computed recursively as in Eqs.~\ref{eq:norm_base_case}-\ref{eq:norm_recursion}. If at any point, one of the norms vanishes, the recursion can be stopped and the correlation function is zero. Only if the product of norms is computed to be non-zero to within the desired precision of the calculation should the Hadamard test be performed to compute $\tilde{G}^{(m)}$, since it is only in this instance that the renormalized Heisenberg operator string can be considered as non-vanishing and physically implementable using standard quantum circuit instructions. For concreteness we discuss a specific example below.

\subsection{Single-Particle Green's Functions}

The greater and lesser Green's functions are defined, respectively as \cite{martin2016interacting}
\begin{equation}\label{eq:greater_gf}
    G^>(2,1) = -i \langle c_{\alpha_2}(t_2) c_{\alpha_1}^{\dagger}(t_1) \rangle
\end{equation}
and
\begin{equation}
        G^<(2,1) = i \langle c_{\alpha_1}^{\dagger}(t_1) c_{\alpha_2}(t_2) \rangle,
\end{equation}
from which other Green's functions can be constructed. For instance, using the Heavyside step function, the retarded Green's function is
\begin{equation}
    G^R(2, 1) = \Theta(t_2 - t_1)(G^>(2,1) - G^<(2,1)).
\end{equation}
Taking Eq.~\ref{eq:greater_gf} as an example, defining the standard time-translationally invariant quantity $t \equiv t_2-t_1$, and computing the expectation value in the ground state via Thm.~\ref{thm:correlator_theorem}, we find
\begin{equation}\label{eq:computable_greater_gf}
    G^>(2,1) = -i \bra{GS} \tilde{c}_{\alpha_2}(t) \tilde{c}^{\dagger}_{\alpha_1}\ket{GS} ||c_{\alpha_2}\ket{\psi_2}|| \: ||c^{\dagger}_{\alpha_1}\ket{GS}||
\end{equation}
where $||c^{\dagger}_{\alpha_1}\ket{GS}||=\sqrt{(1+\bra{GS}Z_{\alpha_1}\ket{GS})/2}$ is computed simply by reading out the ground state wavefuction in the computational $z$-basis, as previously noted (Fig.~\ref{fig:correlator_circuits}a), and
\begin{equation}
||c_{\alpha_2}\ket{\psi_2}||=||c^{\dagger}_{\alpha_1}\ket{GS}||\sqrt{(1-\bra{\tilde{\psi}_2}Z_{\alpha_2}\ket{\tilde{\psi}_2})/2},
\end{equation}
with the state $\ket{\tilde{\psi}_2}=e^{-iHt}\tilde{c}^{\dagger}_{\alpha_1}\ket{GS}$ prepared from Eq.~\ref{eq:normed_fermion} applied to the ground state followed by Hamiltonian time evolution for a time $t$ (Fig.~\ref{fig:correlator_circuits}b). Finally, if both norms are non-zero, the real and imaginary components of the propagator in Eq.~\ref{eq:computable_greater_gf} are computed by two applications of the non-unitary Hadamard test (Fig.~\ref{fig:correlator_circuits}c).

\begin{figure*}
\centering
\includegraphics[width=\linewidth]{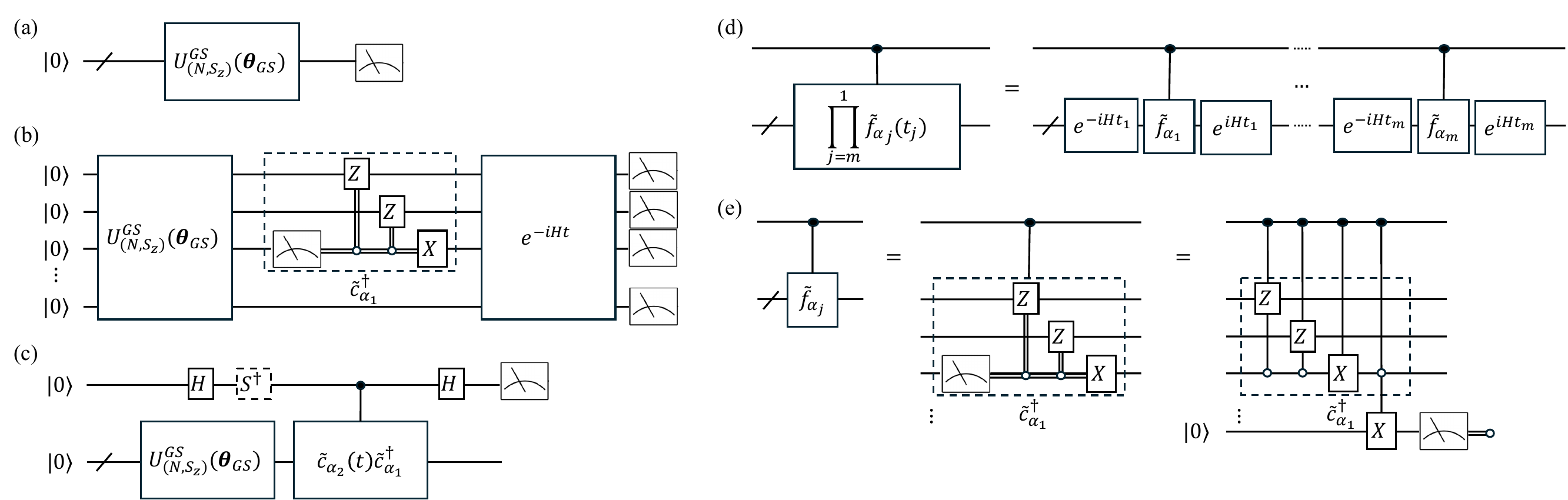}
\caption{\label{fig:correlator_circuits} Circuits to compute $m$-point correlation functions of fermionic Hamiltonians. (a) Ground state preparation and readout in the $z$-basis to compute the expectation value $||c_{\alpha_1}^{\dagger}\ket{GS}||$. (b) Preparation of the state $\ket{\tilde{\psi}_2}$ and readout in the $z$-basis to compute $||c_{\alpha_2}\ket{\psi_2}||$. This circuit can be generalized to compute an arbitrary norm $||f_{\alpha_j}\ket{\psi_j}||$. (c) The modified Hadamard test when the non-unitary Heisenberg operator string is the specific instance $\prod_{j=m}^1 \tilde{f}_{\alpha_j}(t_j) = \tilde{c}_{\alpha_2}(t)\tilde{c}^{\dagger}_{\alpha_1}$. The optional implementation of the $S^{\dagger}$ gate dictates if the real or imaginary part is computed. (d) Any controlled non-unitary Heisenberg fermion string can be implemented by a sequence of unitary time evolutions and single controlled Schr\"odinger fermion gates. (e) Example implementation of a single controlled Schr\"odinger fermion. In this instance a creation operator, $\tilde{c}_{\alpha_1}^{\dagger}=Z_1\otimes Z_2 \otimes X_3 P_3(0)$, needs to be applied to the spin-orbital $\alpha_1$, which has been assigned to the third qubit, conditioned on if the Hadamard test ancilla is in the $\ket{1}$ state. Since the gate must be coherently controlled, we push the projective measurement past the classically-controlled $Z$ gates using the principle of deferred measurement and commute it past the $X$ gate. Then, the entire sequence of controlled operations can be represented by a string of Toffoli-equivalent gates, a CNOT, and a controlled-projector, which uses an additional ancilla.
}
\end{figure*}

Fig~\ref{fig:correlator_circuits}d shows the decomposition of an arbitrary controlled Heisenberg fermion string into a sequence of time evolutions and controlled Schr\"odinger fermion operators. Note that if the control qubit is in the $\ket{0}$ state, none of the fermion operators are applied and all the forward and backward time-evolution operators multiply to the identity. Using the principle of deferred measurement and the relationship $X_{\alpha_j} P_{\alpha_j}(0)=P_{\alpha_j}(1)X_{\alpha_j}$ applied to $\tilde{c}^{\dagger}_{\alpha_1}$ as an example, we decompose each fermion operator into $\alpha_j$ Toffoli-equivalent gates (up to single-qubit rotations), one CNOT, and a single ancilla, post-selected on the $\ket{0}$ state used to implement a controlled projection operator, C$P_{\alpha_j}(1)$ (Fig.~\ref{fig:correlator_circuits}e). The ancilla can be directly reused in the implementation of subsequent fermionic operators. To minimize the number of Toffolis required to compute $m$-point impurity correlation functions, one should therefore index the impurity as low as possible and if there are multiple impurity sites, they should be grouped together by index. In future work, Thm.~\ref{thm:correlator_theorem} can be applied directly, along with the various circuit decompositions in this section, to the computation of quantities of typical interest in DMFT, such as impurity electron-hole propagators and magnetic susceptibilities \cite{PhysRevB.86.125114, martin2016interacting, PhysRevB.96.035114, PhysRevB.106.085124}.

\section{\label{sec:conclusion} Conclusion}

Herein, we have provided two explicit circuit constructions for efficiently computing the central quantities in dynamical mean-field theory. The first construction is a hardware-adaptable and symmetry-preserving variational ansatz that can be used to prepare many-body states of the Anderson impurity model. We show using numerical emulation that this ansatz can prepare ground states of the single-impurity Anderson model in depth roughly linear in the number of bath orbitals and in sub-quartic training time for a set of small to moderate-size models, indicating both good expressibility near the Lieb-Robinson bound and efficient trainability. Moreover, we show that the ansatz can be used to prepare other many-body states like the Lanczos vectors used to compute single-particle Green's functions in the continued fraction representation, although the scalability of this method remains in question, both from the perspective of expressibility and trainability and due to the fact that an exponential number of Lanczos iterations needs to be performed to compute exact Green's functions in the worst case. As such, the second circuit construction is a new method for computing arbitrary $m$-point correlation functions of fermionic systems that uses a combination of time-evolution, mid-circuit measurement, and a modified version of the Hadamard test. Aside from state preparation, which our first construction addresses, and the costs involved with time-evolving under the Anderson impurity model Hamiltonian, the main cost of this method is in the Toffoli complexity, which in the Jordan-Wigner representation, is linear in both $m$ and $N_{\text{imp}}$ given that the impurity orbitals can be indexed such that they are grouped together and assigned low indices.

Our results support the notion that electronic structure computations of strongly correlated materials that use quantum processors as impurity solvers alongside state-of-the-art classical mean-field theories constitute a promising path forward to practical quantum advantage. This notion is based on three observations made here and elsewhere \cite{jamet2023anderson}. First, state-of-the-art mean-field theories like GW theory are often sufficient to capture a large majority of the weakly-correlated physics and chemistry in describing real materials \cite{acharya2023theory}. Second, ground-state preparation of the Anderson impurity model with $O(1)$ impurity orbitals is quasipolynomially efficient both classically and quantumly \cite{bravyi2017complexity}. And third, computation of time-dependent impurity response functions, like one- and two-body Green's functions, involves Hamiltonian time evolution, for which there is a known exponential runtime classically \cite{haah2021quantum}. Meanwhile, we show by construction, that $m$-point correlation functions can be computed efficiently on a quantum computer given access to mid-circuit measurement, corroborating theoretical results suggesting this to be true \cite{pedernales2014efficient, terhal2000problem}. These observations indicate that classical computers are best suited to solving the weakly-correlated bath theory, validating and helping to prepare the ground state of the impurity model, and enforcing DMFT self-consistency, while the quantum computer is best suited to helping to solve for and encoding the ground state and deterministically computing correlation functions in the encoded ground state. Interestingly, such a workflow obviates the need for typical quantum chemistry subroutines like quantum phase estimation (although it may be used to increase the overlap with the true ground state). Moreover, the fidelity of the overall computation can be systematically improved as the fidelity of quantum computers improves, by enlarging the number of impurity orbitals to account for more spatial correlation or by using more orbitals to present a more fine-grained description the bath.

Within this computational paradigm, there are a number of open questions and potential avenues for future research. First and foremost is to assess the expressibility and trainability of our ansatz at both greater system sizes, $N_{\text{bath}}> 6$, and with more impurity sites $N_{\text{imp}}>1$, especially when simulated and trained on a classical computer using fermionic tensor network compression techniques \cite{provazza2024fast} and in the presence of noise. Along these lines, it would be interesting to investigate augmentations of our ansatz that include cooling mechanisms to remove errors via auxillary qubit reset as have been proposed and demonstrated  recently in a number of spin models \cite{matthies2022programmable, mi2024stable, lloyd2024quasiparticle}. Such augmentations may prove useful in preparing both ground and thermal states of the Anderson impurity model, which could have applicability in the quantum computation of high-temperature superconducting materials. In addition, recent results in dynamic circuits for preparing matrix product and higher-dimension tensor network states have shown promise for state preparation in constant and sub-Lieb-Robinson depth \cite{sahay2024finite, smith2024constant}. Evaluating the applicability of these methods to AIM state preparation is likely to be fruitful. Finally, the theoretical method devised herein to compute $m$-point correlation functions should be experimentally validated and used to perform resource estimations on when leveraging quantum computers to compute two-particle observables might confer quantum advantage in describing the physics of strongly correlated materials.

\begin{acknowledgments} 
E. B. J. thanks Fran\c{c}ois Jamet and Ivan Rungger for discussions and material on the Krylov variational quantum algorithm; Swagata Acharya and Mark van Schilfgaarde for discussions on the foundations of DMFT and the properties of the Anderson impurity model; Ben Hall and David Owusu-Antwi for discussions on symmetry-preserving ansatzae; and Rich Rines for discussions on circuit decompositions. This work was authored in part by the National Renewable Energy Laboratory, managed and operated by the Alliance for Sustainable Energy, LLC for the U.S. Department of Energy (DOE) under Contract No. DE-AC36-08G028308. A portion of this research was performed using computational resources sponsored by the
U.S. Department of Energy’s Office of Energy Efficiency and Renewable Energy and located at the National Renewable Energy Laboratory. The views expressed in the article do not necessarily represent the views of the DOE or the U.S. Government. The U.S. Government retains and the publisher, by accepting the article for publication, acknowledges that the U.S. Government retains a nonexclusive, paid-up, irrevocable, worldwide license to publish or reproduce the published form of this work, or allow others to do so. for U.S. Government purposes.
\end{acknowledgments}

\bibliography{bibliography}

\end{document}